\theoremstyle{theorem}
\newtheorem{theorem}{Theorem}[section]
\newtheorem{lemma}[theorem]{Lemma}
\newtheorem{corollary}[theorem]{Corollary}
\theoremstyle{definition}
\newtheorem{definition}[theorem]{Definition}
\newtheorem{example}[theorem]{Example}
\newtheorem{remark}[theorem]{Remark}
\newcommand{\bbN}{\mathbb{N}}
\newcommand{\bbR}{\mathbb{R}}
\newcommand{\calI}{\mathcal{I}}
\newcommand{\cat}[1]{\mathcal{#1}}
\newcommand{\bigland}{\bigwedge}
\newcommand{\id}{\mathrm{id}}
\newcommand{\Term}{\mathcal{T}}
\newcommand{\Free}{\mathcal{F}}
\newcommand{\abs}[1]{{\lvert #1 \rvert}}
\newcommand{\blank}{{-}} % place holder
\newcommand{\mono}{\rightarrowtail}
\newcommand{\epi}{\twoheadrightarrow}
\newcommand{\e}{\varepsilon}
\mathchardef\gt="313E %relation >
\mathchardef\lt="313C %relation <
\renewcommand{\Im}{\mathop{\mathrm{Im}}\nolimits}
\renewcommand{\subset}{\subseteq}
\title{Varieties of Metric and Quantitative Algebras}
\author{Wataru Hino \\ The University of Tokyo, Japan \\
        \url{wataru@is.s.u-tokyo.ac.jp}}
\date{\today}
\begin{document}
\maketitle

\begin{abstract}
  \emph{Metric algebras} are metric variants of $\Sigma$-algebras.
  They are first introduced in the field of universal algebra to deal with
  algebras equipped with metric structures such as normed vector spaces.
  Recently a similar notion of \emph{quantitative algebra} is
  used in computer science to formulate computational effects of probabilistic programs.
  In this paper we show that \emph{varieties of metric algebras}
  (classes defined by a set of \emph{metric equations}) are exactly
  classes closed under (metric versions of) subalgebras, products and quotients.
  This result implies the class of normed vector spaces cannot be defined
  by metric equations, differently from the classical case
  where the class of vector spaces is equational.
  This phenomenon suggests that metric equations are not a very natural class of formulas
  since they cannot express such a typical class of metric algebras.
  Therefore we need a broader class of formulas to acquire an appropriate
  metric counterpart of the classical variety theory.
\end{abstract}

\section{Introduction}

The theory of \emph{quantitative algebra} is a quantitative variant of
the theory of universal algebra,
where an atomic formula is of the form $s =_{\e} t$ using a non-negative real $\e \ge 0$.
This theory is introduced in \cite{Mardare2016}
to model computational effects of probabilistic computations
as a quantitative counterpart of the theory of algebraic effect \cite{PlotkinPower2003}.
In \cite{Mardare2016} they
deal with classes of metric algebras defined by \emph{basic quantitative inferences},
which are formulas of the form $\bigland_{i = 1}^{n} x_i =_{\e_i} y_i \to s =_{\e} t$,
where $x_i$ and $y_i$ are restricted to variables.

They introduce a complete deductive system for quantitative algebras
and show that various well-known metric constructions such as the Hausdorff metric,
the Kantrovich metric and the Wasserstein metric naturally arise as free quantitative algebras
with suitable implicational axioms.
The theory of quantitative algebra is also applied
to the axiomatization of the behavioral distance \cite{Bacci2016}.

The idea of using indexed binary relations to axiomatize metric structures
is already in the literature of universal algebra
\cite{Weaver1995, Khudyakov2003} under the name of \emph{metric algebra}.
It is a slightly wider notion than that of quantitative algebra
in the sense that operations in metric algebras are not required to be non-expansive.
The authors of \cite{Weaver1995, Khudyakov2003} prove
the continuous version of the characterization theorem
for \emph{quasivarieties}, i.e. classes of algebras defined by implications,
and the decomposition theorem corresponding to the one in the classical theory.
They suggest this framework can be used to reproduce the Hahn-Banach theorem,
a classical result on normed vector spaces.

Our work follows this line; we will study universal algebraic
and model-theoretic properties of quantitative and metric algebras.
Our main result is the metric version of Birkhoff's HSP theorem.
The proof is analogous to the classical case; we first construct
free algebras, and then take their quotients.

Another line of research comes from the field of continuous logic,
which is a variant of first-order logic that uses continuous truth values
rather than Boolean values.
The authors of \cite{Yaacov2007} use continuous logic
to formulate metric structures; they interpret the equality predicate as
distance functions.
This formulation is quite natural in the sense that it admits various model theoretic results
like compactness, types, quantifier elimination and stability.
For future work, we would like to study the relationship
between continuous logic and metric and quantitative algebras.

The authors of \cite{JoyOfCats} gives a categorical generalization of the HSP theorem
based on factorization systems, and it would also be interesting to check
whether our work is an instance of this formulation.

\section{Preliminaries}
\label{sec:preliminaries}

In this section, we will briefly review some basic notions
that we need before we introduce metric algebra.
First we consider some notions about metric spaces.

\begin{definition}
  \begin{itemize}
    \item A \emph{(extended) metric space} is a tuple $(X, d)$ of a set $X$ and a distance function
    $d \colon X \times X \to [0, \infty]$ that satisfies
    $d(x, y) = 0 \Leftrightarrow x = y$, $d(x, y) = d(y, x)$ and $d(x, y) + d(y, z) \ge d(x, z)$.

    \item A map $f \colon X \to Y$ between metric spaces is \emph{non-expansive}
    if it satisfies $d(f(x), f(y)) \le d(x, y)$ for each $x, y \in X$.

    \item For a family $(X_i, d_i)_{i \in I}$ of metric spaces,
    its \emph{product} is defined by $\left( \prod_{i} X_i, d \right)$ where
     $d((x_i)_i, (y_i)_i) = \sup_{i \in I} d_i(x_i, y_i)$
     and $d$ is called the \emph{sup metric}.
  \end{itemize}
\end{definition}

Note that we admit infinite distances (hence called \emph{extended})
because the category of extended metric spaces is categorically more amenable
than that of ordinary metric spaces; it has coproducts and arbitrary products.
Moreover a set can be regarded as a discrete metric space,
where every pair of two distinct points has an infinite distance.

\vspace{1ex}

In this paper, we denote $d(x, y) \le \e$ by $x =_{\e} y$.
% \memo{Note: this notation is only used in the syntax level in \cite{Mardare2016}}
To consider a metric structure as a family of binary relations
works well with various metric notions;
e.g. $f \colon X \to Y$ is non-expansive if and only if $x =_\e y$ implies
$f(x) =_\e f(y)$ for each $x, y \in X$ and $\e \ge 0$.
The sup metric of the product space $\prod_{i \in I} X_i$ is also compatible with
this relational view of metric spaces; it is characterized by
$(x_i)_i =_\e (y_i)_i \iff x_i =_{\e} y_i$ for all $i \in I$.

We adopt the sup metric rather than other metrics
(e.g. the 2-product distance) for the product of metric spaces.
One reason is the compatibility with the relational view above.
Another reason is that it corresponds to the product
in the category of extended metric spaces and non-expansive maps.

Recall that the sup metric does not always give rise to the product topology:
the product of uncountably many metrizable spaces is not in general metrizable.

\vspace{1ex}

The other ingredient of metric algebras is $\Sigma$-algebras in universal algebra.
Let $\Sigma$ be an algebraic signature, i.e. a set with an arity map
$\abs{\blank} \colon \Sigma \to \bbN$.

\begin{definition}
  \begin{itemize}
    \item A \emph{$\Sigma$-algebra} is a tuple $A = (A, (\sigma^A)_{\sigma \in \Sigma})$
    where $A$ is a set endowed with an operation $\sigma^A \colon A^\abs{\sigma} \to A$ for
    each $\sigma \in \Sigma$.

    \item A map $f \colon A \to B$ between $\Sigma$-algebras is a \emph{$\Sigma$-homomorphism}
    if it preserves all $\Sigma$-operations, i.e.
    $f(\sigma^A(a_1, \ldots, a_{\abs{\sigma}})) = \sigma^B(f(a_1), \ldots, f(a_{\abs{\sigma}}))$
    for each $\sigma \in \Sigma$ and $a_1, \ldots, a_{\abs{\sigma}} \in A$.

    \item A \emph{subalgebra} of a $\Sigma$-algebra $A$ is a subset of $A$ closed under $\Sigma$-operations,
    regarded as a $\Sigma$-algebra by restricting operations.
    A subalgebra is identified with (the isomorphic class of) a pair $(B, i)$ of a $\Sigma$-algebra
    and an injective homomorphism $i \colon B \to A$.

    \item The \emph{product} of $\Sigma$-algebras $(A_i)_{i \in I}$ is the direct product of
    the underlying sets endowed with the pointwise $\Sigma$-operations.

    \item A \emph{quotient} (or a \emph{homomorphic image}) of a $\Sigma$-algebra $A$ is
    a pair $(B, \pi)$ where $B$ is a $\Sigma$-algebra
    and $\pi \colon A \epi B$ is a surjective homomorphism.

    \item Given a set $X$, the \emph{set $\Term_{\Sigma} X$ of $\Sigma$-terms over $X$} is inductively defined as follows:
    each $x \in X$ is a $\Sigma$-term (called a \emph{variable}), and
    if $\sigma \in \Sigma$ and $t_1, \ldots, t_{\abs{\sigma}}$ are $\Sigma$-terms,
    then $\sigma(t_1, \ldots, t_{\abs{\sigma}})$ is a $\Sigma$-term.

    The set $\Term_{\Sigma} X$ is endowed with a natural $\Sigma$-algebra structure
    and this algebra is called the \emph{free $\Sigma$-algebra over $X$}.
    It satisfies the following universality:
    for each $\Sigma$-algebra $A$, a map $v \colon X \to A$ uniquely extends
    to a homomorphism $v^{\sharp} \colon \Term_{\Sigma} X \to A$.

    \item
    A \emph{$\Sigma$-equation} is a formula $X \vdash s = t$
    where $X$ is a set and $s$ and $t$ are $\Sigma$-terms.
    If $X$ is obvious from the context, we just write $s = t$.

    A $\Sigma$-algebra $A$ \emph{satisfies} an equation $X \vdash s = t$
    (denoted by $A \models s = t$)
    if and only if $v^{\sharp}(s) = v^{\sharp}(t)$ holds for any map $v \colon X \to A$.

    \item A class $\cat{K}$ of $\Sigma$-algebras is a \emph{variety} if there is a set of
    equations $\calI$ such that $\cat{K}$ is the class of $\Sigma$-algebras that satisfy all the equations in $\calI$.
  \end{itemize}
\end{definition}

If the signature $\Sigma$ is obvious from the context,
we omit the prefix $\Sigma$ and just say homomorphism, equation, etc.

\vspace{1ex}

The following theorem is a fundamental theorem in universal algebra proved by Birkhoff.
It states that the property of being a variety is equivalent to a certain closure property.
Our goal is to prove the metric version of this theorem.

\begin{theorem}[Birkhoff's HSP theorem]
  A class $\cat{K}$ of $\Sigma$-algebras is a variety
  if and only if $\cat{K}$ is closed under subalgebras, products and quotients. \qed
\end{theorem}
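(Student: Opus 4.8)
The plan is to prove the two implications separately; only the converse has content. For the forward direction, suppose $\cat{K}$ is the class of $\Sigma$-algebras satisfying a set $\calI$ of equations. I would check directly that satisfaction of a single equation $X \vdash s = t$ is inherited by products, subalgebras and quotients. For a product $\prod_i A_i$, an assignment $v \colon X \to \prod_i A_i$ is a family $(v_i \colon X \to A_i)_i$ and $v^\sharp(s) = v^\sharp(t)$ holds iff it holds in each component. For a subalgebra $B \subseteq A$, an assignment into $B$ is in particular an assignment into $A$, and the two extensions $v^\sharp$ agree. For a quotient $\pi \colon A \epi B$, any $w \colon X \to B$ lifts pointwise (by surjectivity of $\pi$) to some $v \colon X \to A$ with $w = \pi \circ v$, whence $w^\sharp = \pi \circ v^\sharp$ and $w^\sharp(s) = w^\sharp(t)$ follows from $v^\sharp(s) = v^\sharp(t)$.

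For the converse, assume $\cat{K}$ is closed under subalgebras, products and quotients. Let $\calI$ be the set of all equations $X \vdash s = t$ that hold in every algebra of $\cat{K}$, and let $\cat{K}' = \mathrm{Mod}(\calI)$ be the variety it defines, so $\cat{K} \subseteq \cat{K}'$; the goal is the reverse inclusion. The central construction is the \emph{free $\cat{K}$-algebra} over a set $X$: on the free $\Sigma$-algebra $\Term_{\Sigma} X$ define $s \sim_X t$ iff $(X \vdash s = t) \in \calI$. One checks $\sim_X$ is a $\Sigma$-congruence, and sets $F_{\cat{K}}(X) = \Term_{\Sigma} X / {\sim_X}$.

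There are then two claims. First, $F_{\cat{K}}(X) \in \cat{K}$: for each $A \in \cat{K}$ and $v \colon X \to A$ the homomorphism $v^\sharp$ identifies any $\sim_X$-related pair (by definition of $\calI$), so it factors through $F_{\cat{K}}(X)$ via some $\bar v$; conversely if $[s] \neq [t]$ in $F_{\cat{K}}(X)$, some such $\bar v$ separates them. Hence the joint map from $F_{\cat{K}}(X)$ into the product of these $A$'s is injective. To make that product set-indexed I would replace each $A$ by the image of $\bar v$, a subalgebra generated by at most $\abs{X}$ elements and so of bounded cardinality, and let $A$ range over a set of representatives of the isomorphism classes of such algebras; the product then lies in $\cat{K}$ by closure under products, and $F_{\cat{K}}(X)$ lies in $\cat{K}$ by closure under subalgebras. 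Second, every $A \in \cat{K}'$ is a quotient of some $F_{\cat{K}}(X)$: taking $X = A$, the evaluation homomorphism $(\id_A)^\sharp \colon \Term_{\Sigma} A \epi A$ is surjective, and since $A$ satisfies all of $\calI$ it identifies every $\sim_A$-related pair, hence factors as a surjection $F_{\cat{K}}(A) \epi A$. Thus $A \in \cat{K}$ by closure under quotients, which gives $\cat{K}' \subseteq \cat{K}$.

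I expect the only genuine obstacle to be the cardinality bookkeeping in the first claim — turning the informal ``product of all algebras of $\cat{K}$ under all valuations'' into an honest set-indexed product so that closure under products applies — together with the routine checks that $\sim_X$ is a congruence and that $\mathrm{Mod}(-)$ behaves as expected. The rest is a direct unwinding of the universal property of $\Term_{\Sigma} X$ recalled above.
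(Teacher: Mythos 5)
Your argument is correct; it is the standard congruence-based proof of Birkhoff's theorem, and both directions are sound as sketched. Note, however, that the paper states this classical theorem with a \qed and no proof, so the only substantive comparison is with the paper's proof of the metric analogue (Theorem~\ref{thm:hsp-theorem}), which follows the same skeleton as yours -- construct free algebras inside $\cat{K}$, then realize every model of the equational theory as a quotient of one -- but builds the free algebra differently. In Theorem~\ref{thm:free-alg} the paper does not quotient $\Term_{\Sigma}X$ by the syntactic congruence determined by $\calI$; instead it takes the set (up to isomorphism) of all quotients $g_i \colon \Term_{\Sigma}X \epi B_i$ with $B_i \in \cat{K}$, forms the induced map $G \colon \Term_{\Sigma}X \to \prod_i B_i$, and defines the free algebra as $\Im(G)$. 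This semantic construction buys two things: the size problem is dispatched by the single observation that quotients of a fixed algebra form a set up to isomorphism (no cardinality bound on $\abs{X}$-generated subalgebras is needed), and one never has to verify that the relation $\sim_X$ is a congruence or that the quotient realizes exactly the theory of $\cat{K}$ -- the universal property and the equivalence $F(s)=F(t) \iff \cat{K} \models s=t$ fall out of the image factorization. Your construction is more explicit and elementary, at the price of the cardinality bookkeeping you already flagged as the main obstacle. One small point you gloss over: ``the set of all equations $X \vdash s=t$ that hold in every algebra of $\cat{K}$,'' taken over all variable sets $X$, is a proper class; the standard repair, which the paper uses in its metric proof via Lemma~\ref{lem:renaming}, is to restrict $\calI$ to equations over finite variable sets, which loses nothing since each term mentions only finitely many variables.
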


\section{Variety Theorem For Metric Algebras}
\label{sec:variety-malg}

By combining metric structures and $\Sigma$-algebraic structures,
we acquire the definitions of metric algebra and quantitative algebra.
They go as follows.

\begin{definition}[Weaver \cite{Weaver1995}] \leavevmode
  \begin{itemize}
    \item A \emph{metric $\Sigma$-algebra} is a tuple $A = (A, d, (\sigma^A)_{\sigma \in \Sigma})$
    where $(A, d)$ is a metric space and $(A, (\sigma^A)_{\sigma \in \Sigma})$
    is a $\Sigma$-algebra.

    It is called a \emph{quantitative algebra} if each
    $\sigma^{A} \colon A^\abs{\sigma} \to A$ is non-expansive
    (where $A^\abs{\sigma}$ is equipped with the sup metric).

    \item A map $f \colon A \to B$ between metric algebras is
    an \emph{M-homomorphism} if $f$ is $\Sigma$-homomorphic and non-expansive.
    % The prefix M stands for \emph{metric} and this terminology is used in \cite{Weaver1995}.

    \item An \emph{M-subalgebra} $B$ of a metric algebra $A$
    is a subalgebra of $A$ equipped with the induced metric.
    It can be identified with an embedding, i.e.
    a pair $(B, i)$ of a metric algebra $B$ and
    an isometric homomorphism $i \colon B \mono A$.

    \item The \emph{M-product} of metric algebras is
    the product as $\Sigma$-algebras equipped with the sup metric.

    \item An \emph{M-quotient} of a metric algebra $A$ is a pair $(B, \pi)$
    where $B$ is a metric algebra and $\pi \colon A \epi B$ is a surjective $M$-homomorphism.
    It is a \emph{Q-quotient} if $A$ and $B$ are quantitative algebras.
    % (Q is for \emph{quantitative}).

    \item
    For a set $X$ of variables,
    an \emph{M-equation} (or an \emph{atomic inequality} \cite{Weaver1995})
    is a formula of the form $X \vdash s =_{\e} t$
    where $X$ is a set, $s, t \in \Term_{\Sigma} X$ and $\e \ge 0$.
    We omit $X \vdash$ if $X$ is obvious.

    A metric algebra $A$ \emph{satisfies} an M-equation
    $X \vdash s =_{\e} t$ (denoted by $A \models s =_{\e} t$)
    if it satisfies $d^A(v^{\sharp}(s), v^{\sharp}(t)) \le \e$ for each map $v \colon X \to A$.

    \item A class of metric (resp. quantitative) algebras is called a \emph{variety}
    of metric (quantitative) algebras if it is defined by a set of $M$-equations.
  \end{itemize}
\end{definition}

We do not define the notion of Q-subalgebra and Q-product because
M-subalgebras and M-products of quantitative algebras are quantitative algebras.

\vspace{1ex}

The choice of variable set $X$ is irrelevant in the definition
of the satisfaction of metric equations.
This follows from the following straightforward lemma.

\begin{lemma}\label{lem:renaming}
  Let $Y \subset X$ be sets, $Y \vdash s =_\e t$ be an M-equation,
  and $A$ be a metric algebra. Then the following conditions are equivalent:
  \begin{enumerate}
    \item $A$ satisfies $Y \vdash s =_{\e} t$.
    \item $A$ satisfies $X \vdash s =_{\e} t$. \qed
  \end{enumerate}
\end{lemma}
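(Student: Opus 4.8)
The plan is to prove both directions by reducing to manipulations of variable assignments, exactly as in the classical case. For the direction $(1) \Rightarrow (2)$, suppose $A \models (Y \vdash s =_\e t)$ and take an arbitrary assignment $v \colon X \to A$. Since $s$ and $t$ are $\Sigma$-terms over $Y$, they lie in $\Term_\Sigma Y \subseteq \Term_\Sigma X$, and their images under the extension $v^\sharp \colon \Term_\Sigma X \to A$ depend only on the restriction $v \restriction Y \colon Y \to A$. More precisely, if $j \colon Y \mono X$ denotes the inclusion, then $v^\sharp \circ \Term_\Sigma j = (v \circ j)^\sharp$ by uniqueness of homomorphic extensions, so $v^\sharp(s) = (v \restriction Y)^\sharp(s)$ and likewise for $t$. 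Applying the hypothesis to the assignment $v \restriction Y$ gives $d^A(v^\sharp(s), v^\sharp(t)) = d^A((v\restriction Y)^\sharp(s), (v\restriction Y)^\sharp(t)) \le \e$, which is what we need.

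For the direction $(2) \Rightarrow (1)$, suppose $A \models (X \vdash s =_\e t)$ and take an arbitrary assignment $w \colon Y \to A$. We need to extend $w$ to some $v \colon X \to A$; this is possible as long as $A$ is nonempty (pick any element of $A$ and send all of $X \setminus Y$ to it, or use any function $X \setminus Y \to A$). Then $v \restriction Y = w$, so by the same naturality argument as above $w^\sharp(s) = v^\sharp(s)$ and $w^\sharp(t) = v^\sharp(t)$, and the hypothesis applied to $v$ yields $d^A(w^\sharp(s), w^\sharp(t)) \le \e$. The only subtlety is the empty algebra: if $A = \varnothing$ then there are no assignments $Y \to A$ at all (since $Y$ need not be empty — or even if it is empty and contains no terms, but terms over $\varnothing$ only exist if $\Sigma$ has a constant), so $A$ vacuously satisfies every M-equation and both conditions hold trivially. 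I would handle this as a one-line remark rather than a case split, or simply note that when $\Term_\Sigma Y$ is inhabited (which it must be for $s, t$ to exist) and $A$ admits an assignment $Y \to A$, the extension exists.

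I expect no real obstacle here — the lemma is genuinely "straightforward" as the paper claims. The only thing to be careful about is making the naturality statement $v^\sharp \circ \Term_\Sigma j = (v \circ j)^\sharp$ precise and invoking the universal property of free $\Sigma$-algebras correctly; everything else is bookkeeping about restrictions of functions. Note also that the metric structure plays essentially no role: the argument is identical to the classical fact that satisfaction of a $\Sigma$-equation is independent of the ambient variable set, with $d^A(-,-) \le \e$ in place of equality. I would therefore keep the proof short, stating the naturality fact, deriving the two implications in a couple of lines each, and disposing of the empty-algebra edge case in passing.
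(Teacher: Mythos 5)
Your proof is correct and is precisely the standard argument the paper has in mind: the paper states this lemma with no proof at all (it is flagged as ``straightforward'' and closed with \qed), and your restriction/extension of assignments via the naturality identity $v^{\sharp} \circ \Term_{\Sigma}j = (v \circ j)^{\sharp}$, together with the observation that the empty-algebra case is vacuous (an empty $\Sigma$-algebra forces $\Sigma$ to have no constants, so closed terms over $Y = \varnothing$ cannot exist), fills that gap correctly.
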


In the classical case, the following lemma on quotients of algebras is
a folklore theorem, e.g. found in Exercise \textsection 6 of \cite{burris1981course}.

\begin{lemma}\label{lem:quotient-univ}
  Let $A$ be a $\Sigma$-algebra and
  $p \colon A \to B$, $q \colon A \to C$ be its quotients.
  If $p(a) = p(b)$ implies $q(a) = q(b)$ for all $a, b \in A$,
  there is a unique homomorphism $h \colon B \to C$ such that $h \circ p = q$.
  Moreover $h$ is surjective.
\end{lemma}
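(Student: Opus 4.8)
The plan is to define $h$ directly on elements and then verify the required properties one by one. First, since $p$ is surjective, for each $b \in B$ we may choose some $a \in A$ with $p(a) = b$ and set $h(b) := q(a)$. The hypothesis that $p(a) = p(b)$ implies $q(a) = q(b)$ is exactly what makes this assignment independent of the choice of preimage, so $h$ is a well-defined function $B \to C$. By construction $h(p(a)) = q(a)$ for every $a \in A$ (taking $a$ itself as the chosen preimage of $p(a)$, which is legitimate precisely because $h$ is well defined), hence $h \circ p = q$.

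Next I would check that $h$ is a $\Sigma$-homomorphism. Given $\sigma \in \Sigma$ and $b_1, \ldots, b_{\abs{\sigma}} \in B$, pick $a_i \in A$ with $p(a_i) = b_i$. Since $p$ is a homomorphism, $p(\sigma^A(a_1, \ldots, a_{\abs{\sigma}})) = \sigma^B(b_1, \ldots, b_{\abs{\sigma}})$, so by the definition of $h$ and the fact that $q$ is a homomorphism,
\[ h(\sigma^B(b_1, \ldots, b_{\abs{\sigma}})) = q(\sigma^A(a_1, \ldots, a_{\abs{\sigma}})) = \sigma^C(q(a_1), \ldots, q(a_{\abs{\sigma}})) = \sigma^C(h(b_1), \ldots, h(b_{\abs{\sigma}})). \]

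For uniqueness, suppose $h' \colon B \to C$ also satisfies $h' \circ p = q$. For any $b \in B$ write $b = p(a)$; then $h'(b) = h'(p(a)) = q(a) = h(b)$, and since $p$ is surjective this forces $h' = h$. Finally, $h$ is surjective because $q = h \circ p$ is surjective, so the image of $h$ already contains all of $C$.

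I do not expect any genuine obstacle here: the argument is the standard diagram-chase for quotients, and the only point requiring a moment's care is the well-definedness of $h$, which is literally the content of the hypothesis and which also justifies using $a$ itself as the representative of $p(a)$ when checking $h \circ p = q$.
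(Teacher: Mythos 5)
Your proof is correct and follows the same route as the paper: define $h$ on each $u \in B$ by choosing a preimage under $p$ and applying $q$, with the hypothesis supplying well-definedness and the surjectivity of $p$ supplying uniqueness. You are in fact slightly more complete than the paper, which leaves the homomorphism property and the surjectivity of $h$ unverified as routine.
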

\begin{proof}
  Such $h$ is unique by the surjectivity of $p$.
  We construct $h \colon B \to C$ as follows:
  given $u \in B$, take $a \in A$ such that $p(a) = u$ and define $h(u) = q(a)$.
  Then $h$ is well-defined: if we take another $b \in A$ with $p(b) = u$,
  by the fact $p(a) = p(b)$ and the assumption, we have $q(a) = q(b)$.
\end{proof}

The metric version of Lemma~\ref{lem:quotient-univ} also holds.

\begin{lemma}\label{lem:quotient-qalg}
  Let $A$ be a metric algebra and $p \colon A \to B$, $q \colon A \to C$ be its M-quotients.
  If $p(a) =_{\e} p(b)$ implies $q(a) =_{\e} q(b)$ for all $a, b \in A$ and $\e \ge 0$,
  there is a unique M-homomorphism $h \colon B \to C$ such that $h \circ p = q$,
  which is surjective.
\end{lemma}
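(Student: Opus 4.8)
The plan is to reduce the metric statement to the classical Lemma~\ref{lem:quotient-univ} and then verify the extra metric condition by hand. First I would observe that the hypothesis ``$p(a) =_\e p(b)$ implies $q(a) =_\e q(b)$'' in particular gives, by taking $\e = 0$, that $p(a) = p(b)$ implies $q(a) = q(b)$ (using that $=_0$ is genuine equality, since $d(x,y) = 0 \iff x = y$ in an extended metric space). Hence Lemma~\ref{lem:quotient-univ} applies to the underlying $\Sigma$-algebras and yields a unique surjective $\Sigma$-homomorphism $h \colon B \to C$ with $h \circ p = q$; uniqueness as an \emph{M}-homomorphism is then immediate since an \emph{M}-homomorphism is in particular a $\Sigma$-homomorphism and $p$ is surjective.

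The remaining work is to check that this $h$ is non-expansive. I would argue directly from the construction of $h$: given $u, u' \in B$ with $d^B(u, u') \le \e$, pick preimages $a, a' \in A$ with $p(a) = u$, $p(a') = u'$, so that $h(u) = q(a)$ and $h(u') = q(a')$. Then $p(a) =_\e p(a')$ by choice of $a, a'$, so the hypothesis gives $q(a) =_\e q(a')$, i.e. $d^C(h(u), h(u')) \le \e$. Since $\e$ was an arbitrary upper bound for $d^B(u,u')$ we conclude $d^C(h(u), h(u')) \le d^B(u, u')$, so $h$ is non-expansive and hence an \emph{M}-homomorphism.

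I do not expect a serious obstacle here; the lemma is essentially a translation of the classical folklore fact together with a one-line non-expansiveness check, and the only mild subtlety is remembering that the metric hypothesis is indexed over \emph{all} $\e \ge 0$ (in particular $\e = 0$), which is exactly what licenses the appeal to Lemma~\ref{lem:quotient-univ}. One could alternatively phrase the non-expansiveness argument using the relational reading $x =_\e y \iff d(x,y) \le \e$ emphasized earlier in the paper, which makes the computation above completely routine; I would pick whichever phrasing reads more cleanly alongside the classical lemma.
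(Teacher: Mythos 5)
Your proposal is correct and follows essentially the same route as the paper's own proof: instantiate the hypothesis at $\e = 0$ to invoke Lemma~\ref{lem:quotient-univ} for the underlying $\Sigma$-algebras, then verify non-expansiveness by lifting $u, u'$ along the surjection $p$ and applying the hypothesis at the given $\e$. No gaps.
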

\begin{proof}
  The uniqueness directly follows from the surjectivity of $p$, so we only have to
  show the existence.
  By setting $\e = 0$ we have $p(a) = p(b)$ implies $q(a) = q(b)$,
  thus by Lemma~\ref{lem:quotient-univ}
  we have a surjective homomorphism $h \colon B \to C$ satisfying $h \circ p = q$.
  We show that $h$ is non-expansive. Suppose $u, v \in B$ satisfy $u =_{\e} v$.
  Since $p$ is surjective, we can take $a, b \in A$ where $p(a) = u$ and $p(b) = v$.
  Using the assumption and $p(a) =_{\e} p(b)$, we have $q(a) =_{\e} q(b)$.
  Therefore $h(u) = h(p(a)) = q(a) =_{\e} q(b) = h(p(b)) = h(v)$ holds.
\end{proof}

We are now going to prove the variety theorem for metric and quantitative algebras.
As is the case of classical universal algebra, the proof relies on the existence
of free algebras over sets in a given class.
Note that, given a metric algebra $A$, the class of $M$-quotients of $A$ is small
up to isomorphisms.

\begin{theorem}\label{thm:free-alg}
  Let $\cat{K}$ be a class of metric algebras closed under
  M-products and M-subalgebras.
  Then for any set $X$, there is a metric algebra $\Free_{\cat{K}} X$ in $\cat{K}$
  (called the \emph{$\cat{K}$-free algebra over $X$})
  and a surjective homomorphism $F \colon \Term_{\Sigma} X \to \Free_{\cat{K}} X$ satisfying
  the following universality: for any metric algebra $A \in \cat{K}$ and
  a homomorphism $f \colon \Term_{\Sigma} X \to A$,
  there exists a unique $M$-homomorphism $h \colon \Free_{\cat{K}} X \to A$
  such that $h \circ F = f$.
  Moreover for $s, t \in \Term_{\Sigma} X$, we have $F(s) =_\e F(t)$ if and only if
  $A \models s =_{\e} t$ for all $A \in \cat{K}$.
\end{theorem}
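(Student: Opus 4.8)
The plan is to mimic the classical construction of the free algebra in an $\mathrm{SP}$-class: assemble a large algebra as an M-product of all relevant quotients of $\Term_\Sigma X$, then cut down to the M-subalgebra generated by the variables. First I would enumerate the quotients that matter. By the remark preceding the statement, the M-quotients $q \colon \Term_\Sigma X \epi A$ with $A \in \cat{K}$ form, up to isomorphism, a \emph{set} $\{q_i \colon \Term_\Sigma X \epi A_i\}_{i \in I}$; and any homomorphism $f \colon \Term_\Sigma X \to A$ with $A \in \cat{K}$ corestricts onto its image $\Im f$, which with the induced metric is an M-subalgebra of $A$ and hence lies in $\cat{K}$, so $f$ always factors as such an M-quotient followed by an embedding. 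Now form the M-product $P = \prod_{i \in I} A_i$, which lies in $\cat{K}$, together with the diagonal homomorphism $g \colon \Term_\Sigma X \to P$, $g(t) = (q_i(t))_{i \in I}$. Define $\Free_{\cat{K}} X$ to be the image of $g$ with the metric induced from the sup metric of $P$; it is an M-subalgebra of $P$, hence a member of $\cat{K}$, and the corestriction $F \colon \Term_\Sigma X \epi \Free_{\cat{K}} X$ is surjective by construction. Explicitly $d^{\Free_{\cat{K}} X}(F(s), F(t)) = \sup_{i \in I} d^{A_i}(q_i(s), q_i(t))$, and this is a genuine extended metric because each $d^{A_i}$ is.

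For the universal property, take $A \in \cat{K}$ and a homomorphism $f \colon \Term_\Sigma X \to A$. Corestrict it to an M-quotient $f' \colon \Term_\Sigma X \epi \Im f$; since $\Im f \in \cat{K}$, the pair $(\Im f, f')$ is isomorphic to some $(A_{i_0}, q_{i_0})$, say via an isometric isomorphism $\psi \colon A_{i_0} \to \Im f$ with $\psi \circ q_{i_0} = f'$. The projection $\pi_{i_0} \colon P \to A_{i_0}$ is non-expansive for the sup metric and restricts to an M-homomorphism $\Free_{\cat{K}} X \to A_{i_0}$ with $\pi_{i_0} \circ F = q_{i_0}$; composing with $\psi$ and the inclusion $\Im f \mono A$ gives an M-homomorphism $h \colon \Free_{\cat{K}} X \to A$ satisfying $h \circ F = f$. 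Uniqueness of $h$ is immediate from the surjectivity of $F$.

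For the ``moreover'' clause, note that $v = F \restriction X \colon X \to \Free_{\cat{K}} X$ satisfies $v^{\sharp} = F$ by the universal property of $\Term_\Sigma X$, so if $\Free_{\cat{K}} X \models s =_\e t$ then $F(s) =_\e F(t)$. Conversely, if $F(s) =_\e F(t)$, then for any $A \in \cat{K}$ and $v \colon X \to A$ the factorization $h \circ F = v^{\sharp}$ just produced, together with the non-expansiveness of $h$, gives $d^A(v^{\sharp}(s), v^{\sharp}(t)) \le d^{\Free_{\cat{K}} X}(F(s), F(t)) \le \e$, that is, $A \models s =_\e t$; since $\Free_{\cat{K}} X$ itself lies in $\cat{K}$, the two conditions are equivalent. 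I expect the only delicate points to be the smallness of the index set $I$ --- which is exactly the content of the cited remark, resting on a cardinality bound such as $\abs{A_i} \le \abs{\Term_\Sigma X}$ --- and the routine bookkeeping needed to confirm that taking images, projecting out of M-products, and composing all keep us within M-homomorphisms; there is no essential new difficulty beyond the classical HSP argument.
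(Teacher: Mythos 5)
Your proposal is correct and follows essentially the same route as the paper: form the M-product of the set of (isomorphism classes of) M-quotients of $\Term_{\Sigma}X$ into $\cat{K}$, take the image of the diagonal map as an M-subalgebra to obtain $\Free_{\cat{K}}X$, derive the universal property by factoring $f$ through its image and composing with a product projection, and deduce the ``moreover'' clause from non-expansiveness of the induced map together with the fact that $\Free_{\cat{K}}X$ itself lies in $\cat{K}$. The only difference is cosmetic: you spell out the isometric isomorphism $\psi$ identifying $\Im f$ with a chosen representative $A_{i_0}$, which the paper leaves implicit.
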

\begin{proof}
  Fix a set $X$.
  Let $\cat{G} = \{g_i \colon \Term_{\Sigma}{X} \epi B_i \in \cat{K} \}_{i \in I}$
  be the set of (isomorphic classes of) quotients of $\Term_{\Sigma}{X}$ to metric algebras in $\cat{K}$.
  We will denote its product by $G = (g_i)_i \colon \Term_{\Sigma}{X} \to \prod_i B_i$
  and the image factorization of $G$
  by $i \circ F \colon \Term_{\Sigma}{X} \epi \Im(G) \mono \prod_i B_i$.
  We will show $\Im(G)$ is the $\cal{K}$-free algebra over $X$.

  Assume $A$ is a metric algebra in $\cat{K}$ and $f \colon \Term_{\Sigma} X \to A$ is a homomorphism.
  Let $e \circ p \colon \Term_{\Sigma}{X} \epi \Im(f) \mono A$ be the image factorization of $f$.
  Since the isomorphic class of $p$ belongs to $\cal{G}$,
  there exists a projection map $\pi \colon \prod_i B_i \to \Im(f)$
  that is M-homomorphic and satisfies $\pi \circ G = p$.
  Thus we have a commuting M-homomorphism $h = e \circ \pi \circ i \colon \Im(G) \to B$.
  The uniqueness of $h$ follows from the surjectivity of $F$.
  Therefore $\Im(G)$ is the $\cal{K}$-free algebra over $X$.

  The last assertion follows from the universality:
  suppose $F(s) =_{\e} F(t)$, then for any $A \in \cat{K}$ and a map $v \colon X \to A$,
  there exists a $M$-homomorphism $h \colon \Free_{\cat{K}} X \to A$ such that $h \circ F = v^{\sharp}$.
  Since $F(s) =_{\e} F(t)$ and $h$ is non-expansive,
  we have $v^{\sharp}(s) = h(F(s)) =_{\e} h(F(t)) = v^{\sharp}(t)$,
  hence $A \models s =_{\e} t$. Conversely if $A \models s =_{\e} t$ for all $A \in \cat{K}$,
  by taking $A = \Free_{\cat{K}} X$ we conclude $F(s) =_\e F(t)$.
\end{proof}

\begin{remark}
  In \cite{Mardare2016} they give a syntactic construction of free algebras
  when $\cat{K}$ is a class defined by a family of \emph{basic quantitative inferences},
  i.e. formulas of the form $\bigland_{i=1}^{n} x_i =_{\e_i} y_i \to s =_{\e} t$.
  Such classes are closed under $M$-subalgebras and $M$-products,
  hence Theorem~\ref{thm:free-alg} can be applied.
  This gives a semantic construction of free algebras for those cases.
\end{remark}

Since the class of quantitative algebras is closed under M-subalgebras and M-products,
the quantitative version of the theorem directly follows.

\begin{corollary}
  A class $\cal{K}$ of quantitative algebras closed under
  M-products and M-subalgebras has the $\cal{K}$-free algebra over $X$ for any set $X$.
\end{corollary}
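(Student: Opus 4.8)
The plan is to deduce this statement immediately from Theorem~\ref{thm:free-alg}. First I would recall the closure property already noted in the text: the class of quantitative algebras is closed under M-subalgebras and M-products. Indeed, if every operation $\sigma^{A_i} \colon A_i^{\abs{\sigma}} \to A_i$ is non-expansive for the sup metric, then the pointwise operation on $\prod_i A_i$ is non-expansive for the sup metric, because a supremum over $i$ of a supremum over the $\abs{\sigma}$ arguments may be exchanged; and an M-subalgebra of a quantitative algebra, carrying the restricted metric, again has non-expansive operations. This is precisely the reason the paper does not introduce separate notions of Q-subalgebra and Q-product.

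Next I would observe that $\cat{K}$, being a class of quantitative algebras, is in particular a class of metric algebras, and by hypothesis it is closed under M-products and M-subalgebras. Hence Theorem~\ref{thm:free-alg} applies verbatim and produces a metric algebra $\Free_{\cat{K}} X \in \cat{K}$ together with a surjective homomorphism $F \colon \Term_{\Sigma} X \to \Free_{\cat{K}} X$ enjoying the stated universality among members of $\cat{K}$: for every $A \in \cat{K}$ and homomorphism $f \colon \Term_{\Sigma} X \to A$ there is a unique M-homomorphism $h \colon \Free_{\cat{K}} X \to A$ with $h \circ F = f$. Since $\Free_{\cat{K}} X$ already lies in $\cat{K}$, it is automatically a quantitative algebra, and the universal property above is exactly the defining property of the $\cat{K}$-free algebra. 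Thus nothing further is required.

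The only point deserving a word of care — and this is the closest thing to an obstacle, though it is entirely routine — is to confirm that the object produced inside the proof of Theorem~\ref{thm:free-alg} really is a quantitative algebra rather than merely an abstract metric algebra. This is immediate: $\Free_{\cat{K}} X$ is constructed as the image factorization (an M-subalgebra) of a map into the M-product $\prod_i B_i$, whose factors $B_i$ are quotients of $\Term_{\Sigma} X$ belonging to $\cat{K}$ and hence quantitative; by the closure of quantitative algebras under M-products and M-subalgebras recalled above, $\Im(G)$ is quantitative. Alternatively, one may simply invoke the membership $\Free_{\cat{K}} X \in \cat{K}$ directly. Either way, the corollary follows in one line.
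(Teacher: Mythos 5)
Your proposal is correct and follows exactly the paper's route: the corollary is an immediate application of Theorem~\ref{thm:free-alg}, using the fact (already noted in the text) that quantitative algebras are closed under M-subalgebras and M-products, so that $\Free_{\cat{K}} X \in \cat{K}$ is automatically quantitative. The extra verification that the image factorization inside the M-product is quantitative is a harmless elaboration of the same one-line argument.
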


Now we can prove the metric variety theorem using free algebras.

\begin{theorem}\label{thm:hsp-theorem}
  A class $\cat{K}$ of metric algebras is a variety
  if and only if $\cat{K}$ is closed under M-products, M-subalgebras and M-quotients.
\end{theorem}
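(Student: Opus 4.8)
The plan is to run the classical Birkhoff argument, substituting Theorem~\ref{thm:free-alg} for the syntactic construction of free algebras and Lemma~\ref{lem:quotient-qalg} for the classical homomorphism-theorem manipulation. The forward implication is routine. Suppose $\cat{K}$ is defined by a set $\calI$ of M-equations and fix $X \vdash s =_\e t$ in $\calI$. For an M-subalgebra $i \colon B \mono A$ (an isometric embedding) and a valuation $v \colon X \to B$, one has $(i \circ v)^\sharp = i \circ v^\sharp$, so $d^B(v^\sharp(s), v^\sharp(t)) = d^A(i(v^\sharp(s)), i(v^\sharp(t))) \le \e$ since $i$ is isometric and $A \models s =_\e t$. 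For an M-product $\prod_i A_i$ and $v \colon X \to \prod_i A_i$, the sup metric gives $d(v^\sharp(s), v^\sharp(t)) = \sup_i d^{A_i}(v_i^\sharp(s), v_i^\sharp(t)) \le \e$ where $v_i$ is the $i$-th component of $v$. For an M-quotient $\pi \colon A \epi B$ and $w \colon X \to B$, lift $w$ along $\pi$ to $v \colon X \to A$; then $\pi \circ v^\sharp = w^\sharp$, and non-expansiveness of $\pi$ gives $d^B(w^\sharp(s), w^\sharp(t)) \le d^A(v^\sharp(s), v^\sharp(t)) \le \e$. Hence $\cat{K}$ is closed under the three operations.

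For the converse, assume $\cat{K}$ is closed under M-products, M-subalgebras and M-quotients. Fix a countably infinite set $X_0$ of variables and let $\calI$ be the set of all M-equations $X_0 \vdash s =_\e t$ satisfied by every algebra in $\cat{K}$; since $\Term_\Sigma X_0$ and $[0,\infty]$ are sets, this is a set, not a proper class. Let $\cat{K}'$ be the class of all metric algebras satisfying every M-equation in $\calI$. Then $\cat{K} \subset \cat{K}'$ is immediate, and I would prove $\cat{K}' \subset \cat{K}$, which establishes that $\cat{K} = \cat{K}'$ is a variety.

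So take $A \in \cat{K}'$, put $X = A$, and let $v = \id_A \colon X \to A$. Regard $\Term_\Sigma X$ as a metric algebra with the discrete metric (distance $\infty$ between distinct terms), so that every $\Sigma$-homomorphism out of it is automatically non-expansive. Then $v^\sharp \colon \Term_\Sigma X \to A$ is a surjective M-homomorphism, i.e. an M-quotient, and by Theorem~\ref{thm:free-alg} so is $F \colon \Term_\Sigma X \epi \Free_{\cat{K}} X$, with $\Free_{\cat{K}} X \in \cat{K}$. I claim these two M-quotients satisfy the hypothesis of Lemma~\ref{lem:quotient-qalg}: if $F(s) =_\e F(t)$, then by the last clause of Theorem~\ref{thm:free-alg} the M-equation $X \vdash s =_\e t$ holds throughout $\cat{K}$; as $s$ and $t$ involve only finitely many variables, Lemma~\ref{lem:renaming} (together with renaming those variables into $X_0$, which preserves $\cat{K}$-validity) shows the corresponding equation lies in $\calI$, hence is satisfied by $A$, which, evaluated at $v = \id_A$, is exactly $v^\sharp(s) =_\e v^\sharp(t)$. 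Lemma~\ref{lem:quotient-qalg} then produces a surjective M-homomorphism $h \colon \Free_{\cat{K}} X \epi A$ with $h \circ F = v^\sharp$, exhibiting $A$ as an M-quotient of $\Free_{\cat{K}} X \in \cat{K}$. Closure of $\cat{K}$ under M-quotients gives $A \in \cat{K}$, completing the proof.

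The argument is the classical one in outline, so I expect the only real care to be bookkeeping rather than substance: keeping $\calI$ a set while the variable set $X = A$ is allowed to be arbitrarily large (dealt with by finitary-ness of equations, Lemma~\ref{lem:renaming}, and renaming), and checking that $F$ and $v^\sharp$ are genuine M-quotients of a common metric algebra — for which equipping $\Term_\Sigma X$ with the discrete metric is the appropriate choice. The one genuinely metric-flavoured point, namely non-expansiveness of the induced map $h$, has already been packaged into Lemma~\ref{lem:quotient-qalg}.
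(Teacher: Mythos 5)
Your proof is correct and follows essentially the same route as the paper: construct $\Free_{\cat{K}}X$ for $X = A$ via Theorem~\ref{thm:free-alg}, use the last clause of that theorem plus finitariness of equations and Lemma~\ref{lem:renaming} to verify the hypothesis of Lemma~\ref{lem:quotient-qalg}, and conclude that $A$ is an M-quotient of a member of $\cat{K}$. The only differences are presentational: you spell out the easy direction, fix a countable variable set $X_0$ to keep $\calI$ small, and make explicit the discrete metric on $\Term_\Sigma X$ needed to apply Lemma~\ref{lem:quotient-qalg} --- all points the paper leaves implicit.
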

\begin{proof}
  The sufficiency is trivial; we will show the necessity.

  Let $\calI$ be the set of M-equations over finite variables
  that hold for all $A \in \cat{K}$.
  Suppose that $A \models \calI$. Let $X = A$ and $p \colon \Term_{\Sigma} X \to A$
  be the homomorphic extension of the identity map $\id_A \colon A \to A$.
  Note that $p$ is surjective.

  Let $F \colon \Term_{\Sigma} X \to \Free_{\cat{K}}(X)$ be the canonical map
  defined in Theorem~\ref{thm:free-alg}.
  By Lemma~\ref{lem:quotient-qalg},
  we only have to show $F(s) =_{\e} F(t)$ implies $p(s) =_{\e} p(t)$.

  Assume $F(s) =_{\e} F(t)$.
  By Theorem~\ref{thm:free-alg}, any metric algebra in $\cat{K}$
  satisfies $X \vdash s =_{\e} t$.
  Since there occurs finitely many variables in the terms $s$ and $t$,
  we can view $s =_{\e} t$ as an M-equation over some finite set $Y$.
  By Lemma~\ref{lem:renaming}, any metric algebra in $\cat{K}$
  also satisfies $Y \vdash s =_{\e} t$.
  Therefore $Y \vdash s =_\e t \in \calI$, and then $A \models s =_\e t$.
  In particular we can conclude $p(s) =_\e p(t)$.
\end{proof}

In the same manner, we can also show the quantitative variety theorem.

\begin{corollary}
  A class $\cat{K}$ of quantitative algebras is defined by metric equations
  if and only if $\cat{K}$ is closed under M-products, M-subalgebras and Q-quotients.
\end{corollary}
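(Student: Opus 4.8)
The plan is to run the proof of Theorem~\ref{thm:hsp-theorem} again, but now entirely inside the class of quantitative algebras. The forward implication is routine: an M-equation $s =_\e t$ is preserved by M-products and by M-subalgebras (which are in particular product algebras and metric subalgebras), and every Q-quotient is an M-quotient, so it is preserved there as well; hence any class of quantitative algebras cut out by a set $\calI$ of M-equations is closed under M-products, M-subalgebras and Q-quotients.

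For the converse, I would take $\calI$ to be the set of all M-equations over finite variable sets that hold in every member of $\cat K$, and aim to show that a quantitative algebra $A$ with $A \models \calI$ must lie in $\cat K$; together with the obvious inclusion $\cat K \subseteq \{A \text{ quantitative} : A \models \calI\}$ this identifies $\cat K$ as the class defined by $\calI$. Given such an $A$, I would set $X = A$ and let $p \colon \Term_\Sigma X \epi A$ be the homomorphic extension of $\id_A$. Since $\cat K$ is a class of quantitative algebras, it is in particular a class of metric algebras closed under M-products and M-subalgebras, so Theorem~\ref{thm:free-alg} provides $\Free_{\cat K} X \in \cat K$ and a surjection $F \colon \Term_\Sigma X \epi \Free_{\cat K} X$; note that $\Free_{\cat K} X$, lying in $\cat K$, is automatically a quantitative algebra.

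The one spot needing attention is the metric on $\Term_\Sigma X$: I would equip it with the discrete metric (distinct terms at infinite distance), under which it is a quantitative algebra — the $\Sigma$-operations are non-expansive because their domains, under the sup metric, are again discrete — and every $\Sigma$-homomorphism out of it is non-expansive. Thus $F$ and $p$ are surjective M-homomorphisms of quantitative algebras. As in Theorem~\ref{thm:hsp-theorem} it then suffices to verify that $F(s) =_\e F(t)$ implies $p(s) =_\e p(t)$ for all $s, t \in \Term_\Sigma X$ and $\e \ge 0$: assuming $F(s) =_\e F(t)$, the final clause of Theorem~\ref{thm:free-alg} shows every $B \in \cat K$ satisfies $X \vdash s =_\e t$; as $s$ and $t$ involve only finitely many variables, Lemma~\ref{lem:renaming} lets us regard this as an M-equation over a finite set $Y$, so $Y \vdash s =_\e t \in \calI$, whence $A \models s =_\e t$, and applying this to the valuation $\id_A$ (whose homomorphic extension is $p$) gives $p(s) =_\e p(t)$. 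Lemma~\ref{lem:quotient-qalg} then yields a surjective M-homomorphism $h \colon \Free_{\cat K} X \to A$ with $h \circ F = p$, and since $\Free_{\cat K} X$ and $A$ are both quantitative, $h$ realizes $A$ as a Q-quotient of a member of $\cat K$; closure under Q-quotients gives $A \in \cat K$.

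There is no genuinely difficult step. The points I would be careful to make explicit are (i) that $\Term_\Sigma X$ with the discrete metric is a bona fide quantitative algebra, so that $F$ and $p$ qualify as Q-quotients and Lemma~\ref{lem:quotient-qalg} applies unchanged, and (ii) that the free algebra from Theorem~\ref{thm:free-alg} is itself quantitative, so that the factoring map $h$ is a Q-quotient and not merely an M-quotient — these are precisely the places where the quantitative version departs from Theorem~\ref{thm:hsp-theorem}. One should also read ``defined by metric equations'' here as ``equal to the class of all quantitative algebras satisfying some set of M-equations'', which is exactly what the displayed $\calI$ delivers.
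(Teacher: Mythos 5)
Your proof is correct and follows essentially the same route as the paper, which proves this corollary simply by noting it goes ``in the same manner'' as Theorem~\ref{thm:hsp-theorem}. The two points you flag explicitly --- that $\Term_{\Sigma}X$ with the discrete (infinite-distance) metric is a quantitative algebra and that $\Free_{\cat{K}}X$, lying in $\cat{K}$, is quantitative so that $h$ is a genuine Q-quotient --- are exactly the places where the paper's argument needs the quantitative hypothesis, and you handle them correctly.
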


Applying Theorem~\ref{thm:hsp-theorem},
we can prove that the class of normed vector spaces is not a variety of metric algebras
(for the signature of vector spaces).

\begin{example}
  For the signature $\Sigma = \{{+}, 0, (\alpha {\cdot}) \}_{\alpha \in \bbR}$,
  the class $\cat{N}$ of normed vector spaces is a \emph{(continuous) quasivariety}
  of metric $\Sigma$-algebras \cite{Weaver1995, Khudyakov2003},
  but it is not a variety. Indeed consider $\bbR \in \cat{N}$ and
  let $R'$ be a metric algebra that has the same algebraic structure as $\bbR$
  but whose distance is defined by $d(x, y) = \abs{\tanh(y) - \tanh(x)}$.
  The identity map $f \colon \bbR \to R'$ is an M-quotient
  while $R' \not\in \cat{N}$. Therefore $\cat{N}$ is not closed under M-quotient,
  hence not a variety.
\end{example}

The class of normed vector space is a prototypical example of classes of metric algebras,
while the above discussion showed it cannot be expressed by M-equations.
To deal with such classes, we need to use more expressive formulas such as basic quantitative inferences.
We leave to future works to find the closure property that characterizes classes
defined by basic quantitative inferences.

\bibliographystyle{plain}
\bibliography{ref}

\end{document}